\titlespacing*{\section}{0pt}{0.75ex}{0.5ex}
\titlespacing*{\subsection}{0pt}{0.5ex}{0.25ex}
  \providecommand\BibTeX{{%
    \normalfont B\kern-0.5em{\scshape i\kern-0.25em b}\kern-0.8em\TeX}}}
\newtheorem{defn}{Definition}
\newtheorem{prop}{Proposition}
\newtheorem{cor}{Corollary}
\begin{document}

\title[Identification of Fair Auditors based on Non-Comparative Fairness Notion]{On the Identification of Fair Auditors to Evaluate Recommender Systems based on a Novel Non-Comparative Fairness Notion \vspace{-2ex}}

\author{Mukund Telukunta}
\affiliation{%
  \institution{Missouri University of Science and Technology}
  \city{Rolla, Missouri}}
\email{mt3qb@umsystem.edu}

\author{Venkata Sriram Siddhardh Nadendla}
\affiliation{%
  \institution{Missouri University of Science and Technology}
  \city{Rolla, Missouri}}
\email{nadendla@umsystem.edu}

\renewcommand{\shortauthors}{M. Telukunta and V. S. S. Nadendla}

\begin{abstract}
Decision-support systems are information systems that offer support to people's decisions in various applications such as judiciary, real-estate and banking sectors. Lately, these support systems have been found to be discriminatory in the context of many practical deployments. In an attempt to evaluate and mitigate these biases, algorithmic fairness literature has been nurtured using notions of comparative justice, which relies primarily on comparing two/more individuals or groups within the society that is supported by such systems. However, such a fairness notion is not very useful in the identification of fair auditors who are hired to evaluate latent biases within decision-support systems. As a solution, we introduce a paradigm shift in algorithmic fairness via proposing a new fairness notion based on the principle of non-comparative justice. Assuming that the auditor makes fairness evaluations based on some (potentially unknown) desired properties of the decision-support system, the proposed fairness notion compares the system's outcome with that of the auditor's desired outcome.
We show that the proposed fairness notion also provides guarantees in terms of comparative fairness notions by proving that any system can be deemed fair from the perspective of comparative fairness (e.g. individual fairness and statistical parity) if it is non-comparatively fair with respect to an auditor who has been deemed fair with respect to the same fairness notions. We also show that the converse holds true in the context of individual fairness. A brief discussion is also presented regarding how our fairness notion can be used to identify fair and reliable auditors, and how we can use them to quantify biases in decision-support systems.

\end{abstract}


\maketitle

\section{Introduction \label{sec: Introduction}}
In recent years, the rapid advancements in the fields of artificial intelligence (AI) and machine learning (ML) have resulted in the proliferation of algorithmic decision making in many practical applications such as decision-support systems for judges whether or not to release a prisoner on parole \cite{baryjester2015}, decisions for banks regarding granting or denying loans \cite{koren2016}, and product recommendations by e-commerce websites \cite{smith2017two}. Although these algorithms are highly efficient, they are found to be biased and unfair which substantially affect human lives. For example, A study by ProPublica in \cite{angwin2016} has shown that the recidivism scores computed by the COMPAS algorithm are found to be biased in terms of both race and gender. Similarly, racial discrimination is found in e-commerce services in online markets \cite{Harvard2016}, as well as in life insurance premiums \cite{waxmen2018}. In order to reduce the biases in such recommender systems, researchers have proposed different notions of fairness, including individual fairness, statistical parity, equality of opportunity and other similar notions \cite{MoritzOpportunities,chen2019fairness, zafar2017, kusner2017, ritov2017conditional}. \emph{Statistical parity} states that the prediction outcomes across different groups are ensured to have equal probabilities. Unlike some of the other formalizations of fairness, statistical parity is independent of the “ground truth” (i.e. correctness in the data). Even though such fairness notion can be achieved without making any assumptions on data, it does not ensure fairness to individuals or subgroups across different groups. For instance, if there are disproportional outcomes concerning non-protected attributes, statistical parity may show discrimination against qualified individuals in the group due to incorrect predictions. On the contrary, \emph{individual fairness} \cite{dwork2012fairness} states that similar individuals should be treated similarly concerning a particular task. Though the notion focuses on individuals rather than groups, it is hard to determine a metric to compute similarity between two individuals. However, Dwork pointed out in \cite{dwork2012fairness} that the establishment of metric on individuals already exists in many decision support systems such as experience and skills for a job, grades for college admission, and FICO scores for loan applications. Dwork emphasized that similarity metric can be constructed appropriately if the two individuals belong to the same group. On the other hand, human judgements are necessary if the two individuals belong two different groups.

Traditional algorithmic fairness notions have exhibited several drawbacks in the identification and mitigation of biases. Therefore, researchers have developed mechanisms where reliable audit feedback is collected from trusted human auditors to evaluate systems. For example, in \cite{kearns2017preventing}, Kearns \emph{et al.} proposed equality of opportunity and statistical parity notions when there are a large number of protected groups. Their work focuses on developing algorithms for learning classifiers that are fair with respect to large number of groups based on a formulation of two-player zero-sum game between a learner and an auditor. Learner strategy corresponds to classifiers that minimize the sum of prediction error and the Auditor is responsible for rectifying the fairness violation of the Learner. A similar work by Zhang and Neil \cite{zhang2016identifying} also designed audit algorithms for classifiers to identify discrimination in subgroups that have not been pre-defined. However, these approaches raise questions such as: what features should be sensitive? How will you define the infinite class of groups?. Another recent work by Kim \emph{et al.} in \cite{kim2018fairness} aims to compute a distance metric in the notion of individual fairness using an expert auditor. Auditor returns an unbiased estimator to compute the distance between two random individuals based on an unknown fairness metric. Similarly, Gillen \emph{et al.} in \cite{gillen2018online} assumes the existence of an auditor who is capable of identifying the fairness violations made in an online setting based on an unknown metric. Jung \emph{et al.} in \cite{jung2019eliciting} study an offline learning problem with subjective individual fairness which is benefited by human experts. Proposed algorithm obtains a feedback from human experts by asking them questions of the form: “should this pair of individuals be treated similarly or not?”. On the other hand, Raji \emph{et al.} in \cite{raji2020closing} suggests that auditing the algorithms can be achieved through internal organization development cycle which could help tackling the ethical issues raised in a company. Their framework is developed by a small team of auditors in a large company who present data and model documentation along with metrics to facilitate auditing in specific contexts.

\subsection{Contributions}
From a social science perspective, justice can be categorized into two principles: comparative and non-comparative justice \cite{levine2005comparative, feinberg1974noncomparative, montague1980comparative}. The principle of \emph{comparative justice} states that relatively similar cases are treated similarly and dissimilar cases are treated differently. The characteristic feature of comparative justice lies in deriving a comparison or contrast between the way in which some system has treated two individuals or groups. In contrast, the principle of \emph{non-comparative justice} mandates that each person be treated precisely as he/she deserves or merits regardless of the way in which anyone else treated. Non-comparative justice arguments do not need any comparison or contrast between two or more individuals. Although many algorithmic and data-driven approaches have been proposed to improve different fairness notions as discussed, these approaches were executed based on the \emph{principle of comparative justice}. Both individual fairness and statistical parity, either compare two individuals or two groups of individuals. This, along with strong criticism from the social scientists \cite{levine2005comparative, montague1980comparative}. To date, algorithmic fairness has only focused on the principle of comparative justice resulting in an impasse in the fairness literature, thereby raising a question as to what does fairness means from the perspective of \emph{non-comparative justice}. Therefore, in this paper, we explore a novel mathematical fairness notion using an auditor based on the principle of non-comparative justice. Auditor aims to classify the individuals using an intrinsic fair relation based on non-comparative justice. The major downside in such framework is determining whether the auditor is discriminatory or not. Hence, to compute the bias of an auditor, we investigate the relationship between comparative fairness notions (individual fairness and statistical parity) and our proposed notion of non-comparative fairness. We prove that, a system/entity satisfies comparative justice if it satisfies non-comparative fairness with respect to the auditor. We also show that converse holds true in the case of individual fairness. Once an auditor is deemed fair, his/her fair relation can be utilized to examine an unknown system.

\section{Comparative Justice for Algorithmic Fairness \label{sec: Comparative Fairness}}
Comparative justice arguments require to articulate either a comparison or a contrast between the way in which some system has treated two or more individuals or groups. The principle of comparative justice can be formulated as a combination of two standards: (1) Similar cases must be treated similarly, and (2) dissimilar cases must be treated differently. In some contexts, the system might treat individuals differently in a situation in which they should be treated similarly. This leads to comparative injustice. For instance, imagine a situation where two individuals approach a corporate bank to apply for a loan. Even though both of them have relevantly similar qualifications such as credit history, employment, and salary, the bank decides to grant the loan for only one of the individuals. This argument violates the principle of comparative justice because it violates the normative rule, "similar cases must be treated similarly". The bank should either grant both or refuse both. On the contrary, suppose there exist two industries $A$ and $B$ which cause pollution to the environment in varying amounts. Though $B$ produces twice the amount of pollution compared to $A$, the government imposes the same amount of tax on both the industries. This argument violates the principle of comparative justice as it violates the \emph{second standard}, "dissimilar cases must be treated differently".

As mentioned earlier, algorithmic fairness literature has focused on the principle of comparative justice, mainly, statistical parity and individual fairness. The notion of \emph{statistical parity} says that, the prediction outcomes across different groups are ensured to have equal probabilities. More formally,

\begin{defn}[Statistical Parity]
Given protected attributes $A$, if f is a predictor and $Y = f(X)$, where, X is the multi-attribute variable and Y is the outcome, f achieves statistical parity when
\begin{equation}
P[f(x) = y \ | \ A = a] = P[f(x) = y \ | \ A = a']
\end{equation}
holds true for all $y \in Y$, $x \in X$, and $a, a' \in A$.
\label{Defn: GF}
\end{defn}

Here, we can observe that the notion of statistical parity compares two different groups of protected attributes $a$ and $a'$. Different versions of group-conditional metric led to different notions of statistical parity. M. Hardt \emph{et al.} in \cite{MoritzOpportunities} introduced the notion \emph{equality of opportunity} which states that the true positive rate should be the same for all the groups. Similarly, Zafar \emph{et al.} in \cite{zafar2017fairness} proposes \emph{disparate mistreatment} which states that the accuracy of decisions is equal across groups of sensitive attributes with respect to false positive rates.
On the other hand, Zafar in \cite{zafar2017} says that the existing notions of fairness, based on equality in treatment, tends to limit the decision making accuracy, and recommends \emph{preference-based fairness} giving the liberty to a group of users to prefer its treatments or outcomes. Though the notion gives the liberty for a group of users, it does not emphasize on individual preferences. It is possible that few individuals in a group may prefer another option than the one preferred by most of them in the group.

On the contrary, the notion of \emph{individual fairness} \cite{dwork2012fairness} states that similar individuals should be treated similarly concerning a particular task. The notion can be defined as follows.
\begin{defn}[Individual Fairness]
Given a metric $\mathcal{D}$ for a classification task T with an outcome set $O$, a randomized predictor $f(X) = Y$ and distance measure $d : \Delta(O) \times \Delta(O) \rightarrow \mathbb{R}$, f is individually fair if and only if
\begin{equation}
d(f(x_i), f(x_j)) \leq \mathcal{D}(x_i, x_j)
\end{equation}
for all $x_i, x_j \in X$.
\label{Defn: IF}
\end{defn}

Although individual fairness compares two individuals, it does not adhere to the principle of comparative justice. In other words, the notion does not enforce on treating dissimilar individuals differently. Also, the formulation relies on a suitable similarity metric which is difficult to construct in reality. As pointed out by Speicher \emph{et al.} in \cite{speicher2018unified}, individual fairness does not regard individual's merits or desires for different decisions. Speicher proposed a way to measure individual unfairness using inequality indices and taking a person's merit into consideration. On the other hand, M. Kusner \emph{et al.} in \cite{kusner2017} developed a novel framework from causal inference, where a decision or an outcome is considered fair for an individual, only if it is the same in the actual as well as the counterfactual world. For more details, interested readers may refer to a detailed survey on fairness in algorithmic decision-making in \cite{Lepri2018, chouldechova2018frontiers}.

\section{Non-comparative Fairness in Human Judgements \label{sec: NC Fairness}}
Comparative fairness notions are defined based on a comparison between two persons, or two groups of people. However, non-comparative justice notions deviate from this assumption and define justice based on the appropriate treatment to each individual.

The principle of non-comparative justice can be formulated as, "Treat each person as he/she deserves or merits". Such an argument does not depend upon any comparison or contrast with the way in which some system treats two or more individuals or groups. For instance, historically, African-Americans are said to commit more crimes when compared to other races \cite{colorofcrime}. As a result, the COMPAS tool is biased towards African-Americans even though the severity of their crimes is less compared to other races. Since COMPAS \emph{compares} the historical information to present data, it is \emph{non-comparatively unfair}. Hence, we demonstrate the notion of non-comparative fairness with the help of an expert auditor who classifies inputs based on a fair relation regardless of the historical information or any comparisons. We define non-comparative fairness as follows.

\begin{defn}[$\epsilon$-Noncomparative Fairness]
Let $f$ denote a fair assessment (i.e. input-output relationship) of a given system, i.e. $y = f(x)$, which is evaluated subjectively by an expert auditor. If $g$ is an alternative representation, i.e. $\tilde{y} = g(x)$ (e.g. classification algorithms minimizing loss function, recommender systems), then $g$ is called $\epsilon-$noncomparatively fair w.r.t. $f$ if \begin{equation}
\displaystyle d \left( g(x), f(x) \right) < \epsilon, \text{ for all } x \in \mathcal{X}.
\end{equation}
\end{defn}

Though the above notion compares the auditor's fair assessment $f$ with the classifier/recommender $g$, it does not compare two different individuals thereby adhering to the principle of non-comparative justice. However, note that non-comparative fairness notions have their drawbacks. For example, if the auditor is discriminatory, then $f$ is no longer a fair relation. Therefore, it is necessary to investigate how traditional fairness notions are related to the notion of non-comparative fairness. One important assumption in our analysis is that we assume that both the human auditor and the algorithm employ the same distance metric $d$ in evaluating the gap between input-output relationships.

\subsection{Relation with Individual Fairness}
For the sake of this discussion, we define a weaker definition for individual fairness in the following:
\begin{defn}[Weak Individual Fairness]
Given any two individuals $x_i, x_j \in \mathcal{X}$ with $\mathcal{D} \left( x_i,  x_j \right) \geq \kappa$, then $g$ is $(\kappa, \delta)$-individually fair if $d \Big( g(x_i), g(x_j) \Big) \leq \kappa + \delta$.
\label{Defn: kappa-IF}
\end{defn}
Note that the above definition reduces to the original individual fairness notion in Definition \ref{Defn: IF} when $\delta = 0$. Recall that individual fairness partially adopts the principle of comparative fairness by comparing two individuals. In the following proposition, we show how the relation $g$ can be evaluated based on the notion of $(\kappa, \delta)$-individual fairness, when $g$ is non-comparatively fair with respect to another individually fair relation $f$.
\begin{prop}
$g$ is $(\kappa, 2 \epsilon + \delta)$-individually fair, if $g$ is $\epsilon$-noncomparatively fair with respect to $f$, and $f$ is $(\kappa, \delta)$-individually fair.
\label{Prop: (e, k, d)-IF}
\end{prop}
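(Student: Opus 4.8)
The plan is to establish the bound $d\big(g(x_i), g(x_j)\big) \leq \kappa + 2\epsilon + \delta$ for every pair $x_i, x_j \in \mathcal{X}$ satisfying $\mathcal{D}(x_i, x_j) \geq \kappa$, which is precisely the assertion that $g$ is $(\kappa, 2\epsilon + \delta)$-individually fair in the sense of Definition \ref{Defn: kappa-IF}. The entire argument rests on the triangle inequality for the shared distance measure $d$, so the first thing I would do is make explicit the assumption flagged just before this subsection, namely that the auditor and the algorithm use the same metric $d$ on $\Delta(O)$; in particular, I would treat $d$ as symmetric and as satisfying the triangle inequality.

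First I would fix an arbitrary pair $x_i, x_j$ with $\mathcal{D}(x_i, x_j) \geq \kappa$ and interpolate through the auditor's outputs $f(x_i)$ and $f(x_j)$. Applying the triangle inequality twice yields
\begin{equation}
d\big(g(x_i), g(x_j)\big) \leq d\big(g(x_i), f(x_i)\big) + d\big(f(x_i), f(x_j)\big) + d\big(f(x_j), g(x_j)\big).
\end{equation}
Next I would bound the three terms separately. The two outer terms each measure the noncomparative gap between $g$ and $f$ at a single input, so by the hypothesis that $g$ is $\epsilon$-noncomparatively fair with respect to $f$ (and using symmetry of $d$ for the last term), each is strictly less than $\epsilon$. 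The middle term is bounded by $\kappa + \delta$ because $f$ is $(\kappa, \delta)$-individually fair and the chosen pair satisfies $\mathcal{D}(x_i, x_j) \geq \kappa$. Substituting these bounds gives $d\big(g(x_i), g(x_j)\big) < \epsilon + (\kappa + \delta) + \epsilon = \kappa + 2\epsilon + \delta$, which in particular implies the desired non-strict inequality, and since the pair was arbitrary the claim follows.

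The computation itself is just a single chained application of the triangle inequality, so I do not expect any genuinely hard calculation. The only point that requires care is the metric structure of $d$: the interpolation through $f(x_i)$ and $f(x_j)$ collapses if $d$ fails symmetry or the triangle inequality. I therefore expect the main obstacle to be conceptual rather than technical, namely justifying (or explicitly assuming) that $d$ is a bona fide metric rather than a mere dissimilarity score, since the paper so far only introduces $d$ as a ``distance measure.'' Once that is granted, everything else follows immediately from chaining the two stated hypotheses.
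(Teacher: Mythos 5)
Your proposal is correct and follows essentially the same route as the paper's proof: interpolate through $f(x_i)$ and $f(x_j)$, apply the triangle inequality twice, bound the two outer terms by $\epsilon$ via noncomparative fairness and the middle term by $\kappa + \delta$ via the individual fairness of $f$. Your added remarks on the symmetry and triangle-inequality assumptions for $d$ are a reasonable clarification but do not change the argument.
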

\begin{proof}
Given $(x_1, y_1)$ and $(x_2, y_2)$, if $\mathcal{D} \left( x_1,  x_2 \right) \geq \kappa$ (the two individuals are $\kappa$-similar), then $f$ is $(\kappa, \delta)$-individually fair if $d \Big( f(x_1), f(x_2) \Big) < \kappa + \delta$. However, note that if $g$ is $\epsilon$-noncomparatively fair with respect to $f$, then $ d \Big( g(x_1), f(x_1) \Big) < \epsilon$ and $d \Big( g(x_2), f(x_2) \Big) < \epsilon$. Therefore, by applying a chain of triangle inequalities, we obtain
\begin{equation}
\begin{array}{lcl}
d \Big( g(x_1), g(x_2) \Big) & \leq & d \Big( g(x_1), f(x_1) \Big) + d \Big(  f(x_1), f(x_2) \Big) + d \Big( f(x_2), g(x_2) \Big)
\\[2ex]
& < & 2 \epsilon + \kappa + \delta.
\end{array}
\end{equation}
\end{proof}
We illustrate this result using the following example from the banking domain. For instance, consider two individuals who are looking to apply for a loan. An individually fair banking system evaluates both the applications via collecting various customer's attributes such as gender, race, address, credit history, collateral, and his/her ability to pay back. At the same time, consider an auditor who makes fairness judgements based on the rule: "If he/she has cleared all the debts and possesses reasonably valued collateral, the loan must be granted". Given that the auditor treats any two similar individuals similarly, the auditor is individually fair. If the banking evaluation system is relatively similar to the auditor's fair relation, from Proposition \ref{Prop: (e, k, d)-IF}, the banking system is also individually fairness. Otherwise, if the banking system is not non-comparatively fair with respect to the auditor, then the system itself is also not individually fair.

\begin{prop}
If $f$ is not individually fair and if $g$ is $\epsilon$-noncomparatively fair with respect to $f$, then $g$ is also not individually fair.
\label{Prop: IF - converse}
\end{prop}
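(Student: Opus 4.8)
The plan is to mirror the proof of Proposition~\ref{Prop: (e, k, d)-IF}, but to run the triangle inequality in reverse so that a fairness \emph{violation} in $f$ is transferred onto $g$. I would argue directly rather than by contrapositive. First, I would unfold the hypothesis that $f$ is not individually fair: negating Definition~\ref{Defn: IF}, there must exist a witnessing pair $x_1, x_2 \in \mathcal{X}$ for which
\begin{equation}
d\Big( f(x_1), f(x_2) \Big) > \mathcal{D}(x_1, x_2).
\end{equation}
These two points are the only ones I need; the remainder of the domain is irrelevant to producing a violation.

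Next I would invoke $\epsilon$-noncomparative fairness at exactly these two points, which yields $d\big( g(x_1), f(x_1) \big) < \epsilon$ and $d\big( g(x_2), f(x_2) \big) < \epsilon$. The key step is then the reverse triangle inequality: starting from
\begin{equation}
d\Big( f(x_1), f(x_2) \Big) \leq d\Big( f(x_1), g(x_1) \Big) + d\Big( g(x_1), g(x_2) \Big) + d\Big( g(x_2), f(x_2) \Big),
\end{equation}
I would isolate $d\big( g(x_1), g(x_2) \big)$ and substitute the two $\epsilon$-bounds, obtaining $d\big( g(x_1), g(x_2) \big) > d\big( f(x_1), f(x_2) \big) - 2\epsilon > \mathcal{D}(x_1, x_2) - 2\epsilon$. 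If this lower bound can be pushed past $\mathcal{D}(x_1, x_2)$, then $x_1, x_2$ witness a failure of Definition~\ref{Defn: IF} for $g$, and $g$ is not individually fair.

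The main obstacle is precisely the $-2\epsilon$ slack incurred when passing from $f$ to $g$: the reverse triangle inequality certifies a violation for $g$ only when the original violation in $f$ is large enough, namely $d\big(f(x_1),f(x_2)\big) > \mathcal{D}(x_1,x_2) + 2\epsilon$, rather than merely $> \mathcal{D}(x_1,x_2)$. I expect this to be the crux, and I see two natural ways to close it. One is to phrase the claim quantitatively in the weak notion of Definition~\ref{Defn: kappa-IF}, showing that if $f$ is not $(\kappa,\delta)$-individually fair then $g$ fails $(\kappa,\delta-2\epsilon)$-individual fairness; this makes the bookkeeping exact and recovers a clean converse of Proposition~\ref{Prop: (e, k, d)-IF}. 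The other, appropriate for a high-fidelity auditor, is to treat $\epsilon$ as small relative to the violation so that the $2\epsilon$ term is dominated. I would flag which regime the statement intends, since the bare hypothesis ``$f$ is not individually fair'' alone is not quite enough to force the strict inequality for $g$ without one of these refinements.
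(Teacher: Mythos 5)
Your core machinery is identical to the paper's: pick a witnessing pair $(x_1,x_2)$, apply the triangle inequality $d(f(x_1),f(x_2)) \leq d(f(x_1),g(x_1)) + d(g(x_1),g(x_2)) + d(g(x_2),f(x_2))$, substitute the two $\epsilon$-bounds, and conclude $d(g(x_1),g(x_2)) > d(f(x_1),f(x_2)) - 2\epsilon$. Where you diverge is in how the hypothesis ``$f$ is not individually fair'' is unfolded, and this is exactly where the interesting content lies. The paper reads it as the failure of \emph{weak} individual fairness for every choice of parameters: it asserts that for some pair, $d(f(x_1),f(x_2)) > \kappa + \delta$ for \emph{all} $\kappa,\delta \in \mathbb{R}$, i.e., the violation is unbounded. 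Under that reading the $2\epsilon$ slack you worry about is harmlessly absorbed, since $2\epsilon + d(g(x_1),g(x_2)) > \kappa + \delta$ for all $\kappa,\delta$ still forces $d(g(x_1),g(x_2))$ to exceed every real threshold $\tilde{\kappa}$. You instead negate Definition~\ref{Defn: IF} directly, which only gives $d(f(x_1),f(x_2)) > \mathcal{D}(x_1,x_2)$ by a possibly tiny margin, and you correctly identify that the resulting $-2\epsilon$ slack means the bare hypothesis does not force a violation for $g$. Your flag is legitimate: the paper's proof goes through only because of its very strong (arguably degenerate, since no real-valued metric exceeds every real number) reading of non-fairness, and your proposed quantitative refinement --- if $f$ fails $(\kappa,\delta)$-individual fairness by more than $2\epsilon$, then $g$ fails $(\kappa,\delta - 2\epsilon)$ --- is the cleaner and more defensible statement. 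In short, you found the same proof the paper has, plus the caveat the paper silently buries in its definition of the hypothesis.
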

\begin{proof}
If $f$ is not individually fair, then for some input pair $(x_1, x_2)$, we have $d\left(f(x_1), f(x_2)\right) > \kappa + \delta$ for all $\kappa, \delta \in \mathbb{R}$. However, note that if $g$ is $\epsilon$-noncomparatively fair with respect to $f$, then $d\left(g(x_1), f(x_1)\right) < \epsilon$ and $d\left(g(x_2), f(x_2)\right) < \epsilon$. Therefore, by applying a chain of triangle inequalities, we have
\begin{equation}
\begin{array}{lcl}
d\left(f(x_1), f(x_2)\right) & \leq & d\left(g(x_1), f(x_1)\right) + d\left(g(x_2), f(x_2)\right) + d\left(g(x_1), g(x_2)\right)
\end{array}
\end{equation}
Substituting the bounds of $d\left(g(x_2), f(x_2)\right)$ and  $d\left(g(x_1), g(x_2)\right)$ we get
\begin{equation}
\begin{array}{lcl}
2 \epsilon + d\left(g(x_1), g(x_2)\right) & > & d\left(g(x_1), f(x_1)\right) + d\left(g(x_2), f(x_2)\right) + d\left(g(x_1), g(x_2)\right)
\\[2ex]
& \geq & d\left(f(x_1), f(x_2)\right) > \kappa + \delta
\end{array}
\end{equation}
for all $\kappa, \delta \in \mathbb{R}$. Therefore, we also have
\begin{equation}
\begin{array}{lcl}
d\left(g(x_1), g(x_2)\right) & > & \tilde{\kappa}
\end{array}
\end{equation}
for all $\tilde{\kappa} \in \mathbb{R}$.
\end{proof}

Consider the earlier example of banking where, there are two individuals, $A$ and $B$, who possess the same degree of merit. Imagine that the bank approves $A$'s loan application and denies $B$. This outcome remains the same as per the auditor's fair relation. Imagine further that neither $A$ nor $B$ merits the outcome. Though both banking's evaluation and auditor's rule seem to be similar, they violate the precept, "treat similar individuals similarly". Moreover, the outcome violates the principle of non-comparative justice, since $A$ is treated in a way that $A$ does not merit. Hence, we can assert that banking evaluation does not satisfy individual fairness.

\begin{prop}
Even though $f$ is $(\kappa, \delta)$-individually fair, $g$ is also not individually fair if $g$ is not noncomparitively fair with respect to $f$.
\label{Prop: IF - converse 2}
\end{prop}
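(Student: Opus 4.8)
The plan is to mirror the structure of the proof of Proposition~\ref{Prop: IF - converse}, but now driving the conclusion from a deviation between $g$ and $f$ at a single point rather than from an unfairness already present in $f$. First I would unpack the hypothesis that $g$ is \emph{not} noncomparatively fair with respect to $f$: negating the requirement $d(g(x),f(x)) < \epsilon$ for all $x \in \mathcal{X}$, there must exist at least one input $x_1 \in \mathcal{X}$ at which the gap $d(g(x_1), f(x_1))$ exceeds every threshold $\epsilon \in \mathbb{R}$, read in the same unbounded sense that the excerpt uses for ``not individually fair.''

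Next I would fix an auxiliary input $x_2 \in \mathcal{X}$ with $\mathcal{D}(x_1, x_2) \geq \kappa$, so that the hypothesis that $f$ is $(\kappa,\delta)$-individually fair supplies the bound $d(f(x_1), f(x_2)) \leq \kappa + \delta$. The core computation is then a chain of triangle inequalities linking $g(x_1)$, $g(x_2)$, $f(x_1)$, and $f(x_2)$, rearranged into reverse form:
\begin{equation}
d\big(g(x_1), g(x_2)\big) \;\geq\; d\big(g(x_1), f(x_1)\big) - d\big(g(x_2), f(x_2)\big) - d\big(f(x_1), f(x_2)\big).
\end{equation}
Substituting the individual-fairness bound on the last term gives $d(g(x_1),g(x_2)) \geq d(g(x_1),f(x_1)) - d(g(x_2),f(x_2)) - (\kappa+\delta)$.

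To finish, I would observe that for the fixed pair $(x_1, x_2)$ both $d(g(x_2), f(x_2))$ and $\kappa + \delta$ are finite constants, whereas $d(g(x_1), f(x_1))$ was shown to surpass every real number; hence the right-hand side, and therefore $d(g(x_1), g(x_2))$, exceeds every $\tilde{\kappa} \in \mathbb{R}$. This is precisely the ``not individually fair'' conclusion in the sense established by Proposition~\ref{Prop: IF - converse}, so $g$ fails individual fairness.

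The step I expect to be the main obstacle is controlling the cross term $d(g(x_2), f(x_2))$: the subtraction only leaves the growth coming from $x_1$ intact if this quantity remains finite at the chosen $x_2$, i.e., if the pathological deviation of $g$ from $f$ is localized at $x_1$ rather than being simultaneously unbounded at $x_2$. I would therefore state explicitly that $x_2$ is selected as a point where $g$ tracks $f$ (or at least where their gap is finite) and confirm that such an $x_2$ satisfying $\mathcal{D}(x_1,x_2) \geq \kappa$ actually exists in $\mathcal{X}$; without this, the term $d(g(x_2),f(x_2))$ could in principle cancel the divergence contributed by $x_1$ and the argument would break down.
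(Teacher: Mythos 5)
Your proposal follows essentially the same route as the paper's own proof: it isolates an input $x_1$ at which $d(g(x_1),f(x_1))$ is unbounded, pairs it with an auxiliary $x_2$ at which the gap stays bounded, and chains triangle inequalities through $f(x_1)$ and $f(x_2)$ (using the $(\kappa,\delta)$-individual fairness of $f$ to bound $d(f(x_1),f(x_2))$ by $\kappa+\delta$) to force $d(g(x_1),g(x_2))$ above every threshold. Your closing caveat --- that one must check such an $x_2$ exists with a finite gap and with $\mathcal{D}(x_1,x_2)\geq\kappa$ --- is an assumption the paper's proof makes silently, so your write-up is, if anything, slightly more careful on the same argument.
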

\begin{proof}
If $g$ is not noncomparitively fair with respect to $f$, then there exists some $x \in \mathcal{X}$ such that $d\left(g(x), f(x)\right)$ is unbounded. Let $x_1$ be the input for which $d\left(g(x_1), f(x_1)\right)$ is unbounded, i.e. $d\left(g(x_1), f(x_1)\right) > \pi$ for any $\pi \in \mathbb{R}$. At the same time, let $x_2$ denote another input for which $d\left(g(x_2), f(x_2)\right)$ is bounded by $\epsilon$.

By applying a chain of triangle inequalities, we know
\begin{equation}
\begin{array}{lcl}
d\left(g(x_1), f(x_1)\right) & \leq & d\left(g(x_1), g(x_2)\right) + d\left(g(x_2), f(x_2)\right) + d\left(f(x_2), f(x_1)\right)
\end{array}
\end{equation}

Substituting the bounds for the terms $d\left(g(x_2), f(x_2)\right)$ and $d\left(f(x_1), f(x_2)\right)$, we obtain
\begin{equation}
\begin{array}{lcl}
d\left(g(x_1), g(x_2)\right) + \epsilon + \kappa + \delta & > & d\left(g(x_1), g(x_2)\right) + d\left(g(x_2), f(x_2)\right) + d\left(f(x_2), f(x_1)\right)
\\[2ex]
& > & \pi,
\end{array}
\end{equation}
for any $\pi \in \mathbb{R}$. In other words, we have
\begin{equation}
d\left(g(x_1), g(x_2)\right) > \pi - (\epsilon + \kappa + \delta),
\end{equation}
for any $\pi \in \mathbb{R}$.
\end{proof}

To illustrate Proposition \ref{Prop: IF - converse 2}, consider the recent accusation of inequality faced by the Apple Card (credit card by Apple Inc.) \cite{applecard}. The government has launched an investigation into the company's practices after a user accused Apple of gender bias in determining credit limits. The user received a limit 20 times higher than his wife's even though both the profiles are relatively similar. As per Apple's practices, similar individuals are \emph{not} treated similarly. Now, consider an auditor who determines the credit limits of the users using the rule: "If the credit score of the user is greater than $x$ and payments are paid duly, the credit line must be \$3000. If not, the credit line can be \$1000". Note that, the auditor's fair relation treats similar individuals similarly. We can observe that Apple's evaluation of determining credit limits is significantly different from auditor's rule. Therefore, Apple does not satisfy non-comparative fairness and thereby, dissatisfying individual fairness.


\subsection{Relation with Statistical Parity}
We define a weaker definition for group fairness in the following:
\begin{defn}[Weak Statistical Parity]
Given set of protected attributes $A$, $f$ satisfies $\delta$-statistical parity when
\begin{equation}
P[f(x) = y \ | \ A = a] - P[f(x) = y \ | \ A = a'] \leq \delta
\end{equation}
holds true for all $y \in Y$, $x \in X$, and $a, a' \in A$.
\label{Defn: delta-GF}
\end{defn}
Note that the above definition equates Definition \ref{Defn: GF} when the difference between the probabilities is equal to zero. As discussed earlier, statistical parity resembles the principle of comparative justice by comparing two different protected groups. In the remaining section, we will focus on the relationship between statistical parity and non-comparative fairness. For the sake of convenience, let us denote $p_{x,y}(g,a) = P[g(x) = y \ | \ A = a]$.
\begin{prop}
Given that the probability distributions are $M$-Lipschitz continuous over all possible $f$ and $g$ functions, $g$ satisfies $(2M\epsilon + \delta)$-statistical parity, if $g$ is $\epsilon$-noncomparatively fair with respect to $f$, and $f$ satisfies $\delta$-statistical parity.
\label{Prop: GF}
\end{prop}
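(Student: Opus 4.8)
The plan is to mirror the additive triangle-inequality argument used in Proposition \ref{Prop: (e, k, d)-IF}, but to replace the metric-on-outputs step with one on the conditional probabilities, using the $M$-Lipschitz assumption as the bridge between the two. The key is to read the Lipschitz hypothesis as asserting that, for each fixed $x$, $y$, and protected value $a$, the map sending an output distribution to its conditional probability mass at $y$ is $M$-Lipschitz with respect to $d$; concretely,
\[
\left| p_{x,y}(g,a) - p_{x,y}(f,a) \right| \leq M \cdot d\big(g(x), f(x)\big).
\]

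First I would fix arbitrary $y \in Y$, $x \in X$, and $a, a' \in A$ and write the telescoping decomposition
\[
p_{x,y}(g,a) - p_{x,y}(g,a') = \big[ p_{x,y}(g,a) - p_{x,y}(f,a) \big] + \big[ p_{x,y}(f,a) - p_{x,y}(f,a') \big] + \big[ p_{x,y}(f,a') - p_{x,y}(g,a') \big].
\]
Then I would bound the three brackets separately. The middle bracket is at most $\delta$ by the hypothesis that $f$ satisfies $\delta$-statistical parity (Definition \ref{Defn: delta-GF}). The outer two brackets are each controlled by the Lipschitz inequality above, and since $g$ is $\epsilon$-noncomparatively fair with respect to $f$ we have $d(g(x), f(x)) < \epsilon$, so each outer bracket is strictly less than $M\epsilon$. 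Summing gives $p_{x,y}(g,a) - p_{x,y}(g,a') < 2M\epsilon + \delta$, and since the bound is uniform over all $y, x, a, a'$, we conclude that $g$ satisfies $(2M\epsilon + \delta)$-statistical parity.

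The main obstacle I anticipate is not the algebra --- which is a direct three-term triangle inequality identical in spirit to the individual-fairness propositions --- but the precise interpretation and justification of the $M$-Lipschitz hypothesis. The distance $d$ is defined on distributions over the outcome set (Definition \ref{Defn: IF}), whereas $p_{x,y}(\cdot, a)$ is a single conditional coordinate of such a distribution; I would need to make explicit that evaluating the probability mass at a fixed outcome $y$ is indeed $M$-Lipschitz in $d$, which is automatic if $d$ dominates, for example, the total-variation or $\ell_\infty$ distance on $\Delta(O)$. A secondary subtlety is the interaction between the conditioning event $A = a$ and the randomness of the predictor output: the noncomparative-fairness bound $d(g(x), f(x)) < \epsilon$ is stated unconditionally on $x$, so I would want the Lipschitz constant $M$ to hold \emph{uniformly} across protected groups, so that the per-group estimate $M\epsilon$ applies simultaneously to both $a$ and $a'$.
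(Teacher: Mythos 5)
Your proof is correct and follows essentially the same route as the paper: the identical three-term telescoping decomposition of $p_{x,y}(g,a) - p_{x,y}(g,a')$, with the middle term bounded by $\delta$ via the statistical parity of $f$ and the outer terms each bounded by $M\epsilon$ via the Lipschitz assumption combined with $d(g(x),f(x)) < \epsilon$. Your added remarks on how to interpret the $M$-Lipschitz hypothesis are a useful clarification of a point the paper leaves implicit, but they do not change the argument.
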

\begin{proof}
Given the set of protected attributes $\mathcal{A}$, since $f$ satisfies $\delta$-statistical parity, we have $ || p_{x,y}(f,a) - p_{x,y}(f,a') || < \delta$ for all $a, a' \in \mathcal{A}$. Then, we have
\begin{equation}
\begin{array}{lcl}
p_{x,y}(g, a) - p_{x,y}(g, a') & = & \left[ p_{x,y}(g, a) - p_{x,y}(f, a) \right] - \left[ p_{x,y}(g, a') - p_{x,y}(f, a') \right]
\\[2ex]
&& \qquad + \left[ p_{x,y}(f, a) - p_{x,y}(f, a') \right]
\end{array}
\end{equation}
Assuming $M$-Lipschitz continuity over all $f(x)$, $g(x)$, we have $|| p_{x,y}(g, a) - p_{x,y}(f, a) || < M \cdot \epsilon$, since $d (g(x), f(x) ) < \epsilon$. Combining all the inequalities, we have $|| p_{x,y}(g, a) - p_{x,y}(g, a') || < 2 M \epsilon + \delta$.
\end{proof}

An illustrative example is the \emph{reservation system} enforced by the Indian Constitution which sets quotas to socially and educationally backward communities to offer fair representation in the field of education, employment, and politics \cite{reservationindia} from a group fairness perspective. Reservations are employed to 3 minority groups: Scheduled Castes (7.5\% quota), Scheduled Tribes (15\% quota), and Other Backward Classes (27\% quota) ranked in decreasing order of their income cap. The remaining quota of jobs/admissions are allotted to the merit group of citizens. Now, suppose an auditor presents fair judgements based on the rule: "If group A's annual income is less than \$8000, then 30\% of the quota of jobs/college admissions are allotted to A. Similarly, if group B's annual income is greater than \$15000, then 17\% of the quota is assigned to B". Note that, the auditor's fair relation is somewhat similar to that of the government's policy. Assuming the government's policy is fair, the auditor is also unbiased from a group fairness perspective.

\section{Identification of Fair Auditors \label{sec: Unknown Biases}}

Several attempts have been made to identify how people perceive fairness in order to automate the process of fixing fairness from an algorithmic standpoint \cite{binns2019human,grgic2018human,greenberg1986determinants, saxena2019fairness, srivastava2019mathematical}. However, the underlying assumption in most of the fairness literature is that these systems are evaluated by fair and unbiased auditors. This is not always true because people exhibit a wide range of biases based on diverse prior experiences. As a solution to this problem, we will demonstrate how the proposed non-comparative fairness notion can be used to identify fair auditors. A direct approach is to compare an auditor's evaluation to a benchmark entity (e.g. a well-studied system, a known human expert) whose biases are well-quantified in terms of both weak individual fairness and weak group fairness notions as defined in Section \ref{sec: NC Fairness}. Using Propositions \ref{Prop: (e, k, d)-IF} and \ref{Prop: GF}, we can evaluate the biases of an unknown auditor as a function of biases present in the benchmark entity, as stated below.
\begin{cor}[to Proposition \ref{Prop: (e, k, d)-IF}]
If the benchmark entity is $(\kappa, \delta)$-individually fair, then the unknown auditor can be deemed $(\kappa, \delta')$-individually fair when the auditor is $\epsilon$-noncomparitively fair with respect to the benchmark entity such that
$$
\epsilon < \displaystyle \frac{\delta' - \delta - \kappa}{2}.
$$
\end{cor}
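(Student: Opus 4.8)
The plan is to treat this as a direct specialization of Proposition~\ref{Prop: (e, k, d)-IF}, since the corollary merely rebrands the two functions appearing there. First I would identify the benchmark entity with the relation $f$ and the unknown auditor with the relation $g$. The two hypotheses of the corollary---that the benchmark is $(\kappa, \delta)$-individually fair and that the auditor is $\epsilon$-noncomparatively fair with respect to the benchmark---are then precisely the two hypotheses of Proposition~\ref{Prop: (e, k, d)-IF}. Invoking that proposition directly certifies that, for every pair $x_1, x_2 \in \mathcal{X}$ with $\mathcal{D}(x_1, x_2) \geq \kappa$, the auditor satisfies the chain-of-triangle-inequalities bound $d\big(g(x_1), g(x_2)\big) < 2\epsilon + \kappa + \delta$. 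No new metric argument is needed at this stage; I am simply reusing the conclusion already obtained in the proof of Proposition~\ref{Prop: (e, k, d)-IF}.

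The remaining step is to convert this certified gap into the claimed $(\kappa, \delta')$-individual-fairness guarantee for the auditor. I would require the right-hand side $2\epsilon + \kappa + \delta$ to fall below the admissible fairness budget $\delta'$, so that the auditor's pairwise output distance stays within the target tolerance. Solving the resulting linear inequality $2\epsilon + \kappa + \delta < \delta'$ for $\epsilon$ yields exactly
$$
\epsilon < \frac{\delta' - \delta - \kappa}{2},
$$
which is the stated hypothesis. Under this condition the bound from Proposition~\ref{Prop: (e, k, d)-IF} collapses to $d\big(g(x_1), g(x_2)\big) < \delta' \leq \kappa + \delta'$ (using $\kappa \geq 0$), so the requirement of Definition~\ref{Defn: kappa-IF} is met and the auditor is certified as $(\kappa, \delta')$-individually fair.

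Because this is a corollary, I do not anticipate any genuine technical obstacle: everything follows from substituting the named entities into Proposition~\ref{Prop: (e, k, d)-IF} and rearranging one inequality. The only point that requires care is the bookkeeping around the parameter $\kappa$---specifically, whether $\kappa$ is kept outside the target threshold (giving the stated condition $\epsilon < \tfrac{\delta'-\delta-\kappa}{2}$, which forces the stronger conclusion $d(g(x_1),g(x_2)) < \delta'$) or is absorbed into it (which would already suffice for $(\kappa,\delta')$-fairness under the milder condition $\epsilon \leq \tfrac{\delta'-\delta}{2}$). I would make this choice explicit at the outset so the final inequality matches the corollary verbatim, and I would note that the strictness of the bound on $\epsilon$ is inherited directly from the strict noncomparative-fairness inequality $d(g(x), f(x)) < \epsilon$.
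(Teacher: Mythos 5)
Your proposal is correct and follows essentially the same route as the paper's own proof: invoke Proposition~\ref{Prop: (e, k, d)-IF} to obtain $d\left(g(x_1), g(x_2)\right) < 2\epsilon + \kappa + \delta$, then demand that this quantity fall below the target tolerance $\delta'$ and solve the linear inequality for $\epsilon$. Your side remark on the $\kappa$ bookkeeping is well taken---requiring $2\epsilon+\kappa+\delta < \delta'$ is slightly stronger than what Definition~\ref{Defn: kappa-IF} (which only needs the bound $\kappa+\delta'$) strictly demands, and the paper makes the same conservative choice, so the two arguments coincide.
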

\begin{proof}
Let $g$ denote the unknown auditor and $f$ denote the benchmark entity. From Proposition \ref{Prop: (e, k, d)-IF}, if $f$ is $(\kappa, \delta)$-individually fair and $g$ is $\epsilon$-noncomparatively fair with respect to $f$, we have
\begin{equation}
\begin{array}{lcl}
d\left(g(x_1), g(x_2)\right) & < & 2\epsilon + \kappa + \delta
\end{array}
\end{equation}

However, our goal is to identify auditors who are $(\kappa, \delta')$-individually fair. In other words, we need $\delta'$ to be at least as large as $2\epsilon + \kappa + \delta$. In other words,
\begin{equation}
\begin{array}{lcl}
\delta' & > & 2\epsilon + \kappa + \delta.
\end{array}
\end{equation}

Upon rearranging the terms, we get the bound on $\epsilon$ stated in this corollary.

\end{proof}

Similarly, we can also quantify the tolerable bias in an unknown auditor if the goal is to identify a fair auditor from a weak statistical parity sense. This is discussed in the following corollary.
\begin{cor}[to Proposition \ref{Prop: GF}]
Assuming $M$-Lipschitz continuity in probability distributions at both the benchmark entity and the unknown auditor, if the benchmark entity satisfies $\delta$-statistical parity, then the unknown auditor can be deemed to satisfy $\delta'$-statistical parity when the auditor is $\epsilon$-noncomparitively fair with respect to the benchmark entity such that
$$
\epsilon < \displaystyle \frac{\delta' - \delta}{2M}.
$$
\end{cor}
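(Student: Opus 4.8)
The plan is to mirror the structure of the preceding corollary to Proposition \ref{Prop: (e, k, d)-IF}, now invoking Proposition \ref{Prop: GF} in place of the individual-fairness bound. First I would set $g$ to be the unknown auditor and $f$ to be the benchmark entity, so that the hypotheses align exactly with those of Proposition \ref{Prop: GF}: the benchmark $f$ satisfies $\delta$-statistical parity, the auditor $g$ is $\epsilon$-noncomparatively fair with respect to $f$, and the probability distributions are $M$-Lipschitz continuous at both entities.

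Next I would apply Proposition \ref{Prop: GF} directly to conclude that $g$ satisfies $(2M\epsilon + \delta)$-statistical parity, i.e. $|| p_{x,y}(g, a) - p_{x,y}(g, a') || < 2M\epsilon + \delta$ for all $a, a' \in A$. This is the quantitative guarantee the proposition hands us essentially for free, and it is the only nontrivial ingredient in the argument.

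The deeming step is then the reverse reading of this bound. Since the objective is to certify the auditor as $\delta'$-statistically parity-fair, it suffices to require that the guaranteed parity gap $2M\epsilon + \delta$ not exceed the target tolerance $\delta'$; that is, $\delta' > 2M\epsilon + \delta$. Solving this single inequality for $\epsilon$ yields $\epsilon < (\delta' - \delta)/(2M)$, which is precisely the threshold asserted in the statement.

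I expect no genuine obstacle here beyond routine algebra, as the substantive work is entirely absorbed by Proposition \ref{Prop: GF}. The one point worth flagging is that the conclusion is vacuous unless $\delta' > \delta$ (with $M > 0$), so that the admissible interval for $\epsilon$ is nonempty; implicitly one assumes the target tolerance $\delta'$ is at least as loose as the benchmark's own parity gap $\delta$, which is the natural regime when certifying an auditor against a better-understood benchmark.
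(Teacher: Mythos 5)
Your proposal is correct and follows the paper's own proof essentially verbatim: identify $g$ as the auditor and $f$ as the benchmark, invoke Proposition \ref{Prop: GF} to get the $(2M\epsilon+\delta)$ parity bound, and solve $\delta' > 2M\epsilon + \delta$ for $\epsilon$. Your added observation that the bound is vacuous unless $\delta' > \delta$ is a sensible remark the paper omits, but the argument is otherwise the same.
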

\begin{proof}
Let $g$ denote the unknown auditor and $f$ denote the benchmark entity. From Proposition \ref{Prop: GF}, if all probability distributions are $M$-Lipschitz continuous, $f$ satisfies $\delta$-statistical parity and $g$ is $\epsilon$-noncomparatively fair with respect to $f$, we have
\begin{equation}
\begin{array}{lcl}
||p_{x, y}(g, a) - p_{x, y}(g, a')|| < 2M\epsilon + \delta
\end{array}
\end{equation}

However, our goal is to identify auditors who satisfy $\delta'$-statistical parity. In other words, we need
\begin{equation}
\begin{array}{lcl}
\delta' > 2M\epsilon + \delta.
\end{array}
\end{equation}

Upon rearranging the terms, we obtain the bound on $\epsilon$ as stated in the corollary statement.

\end{proof}
Thus, any given auditor can be evaluated in terms of traditional fairness notions (e.g. individual fairness, statistical parity) using our proposed non-comparative fairness notion. Once the biases within a benchmark system is quantified, we can identify fair auditors via screening them using a threshold-based rule on their biases from a non-comparative fairness perspective. Note that the threshold used in this classifier can be computed based on the designer's bias tolerance. Having identified fair auditors along with quantifying their respective biases, we can leverage them as benchmark entities to evaluate an unknown recommender system via computing the system's biases using the auditor's evaluation.

\section{Conclusion and Future Work}
We introduced a non-comparative fairness notion which complements the existing comparative fairness notions proposed in the algorithmic fairness literature. We showed that any system can be deemed fair from the perspective of comparative fairness (e.g. individual fairness and statistical parity) if it is non-comparatively fair with respect to an auditor who has been deemed fair with respect to the same fairness notions. We also proved that the converse holds true in the context of individual fairness. We discussed how the proposed non-comparative fairness notion can be used to identify fair auditors who are hired to evaluate latent biases in decision-support systems. In our future work, we will develop novel algorithms to identify fair auditors from the perspective of multiple (potentially intransitive) attributes, and also validate our theoretical findings using real data.

\bibliographystyle{ACM-Reference-Format}
\bibliography{sample-bibliography}

\end{document}